\documentclass[conference]{IEEEtran}

\usepackage{cite}
\usepackage{graphicx}
\usepackage{booktabs}
\usepackage{xcolor}
\usepackage{algorithm}
\usepackage{algpseudocode}
\usepackage{float}
\usepackage{tikz}
\usepackage{amsmath,amssymb,amsfonts,amsthm}
\newtheorem{proposition}{Proposition}
\usepackage{hyperref}
\usepackage{url}
\usetikzlibrary{shapes.geometric, arrows.meta, positioning, fit, backgrounds}

\graphicspath{{figures/}}

\newcommand{\finding}[1]{\par\vspace{1.5ex}\noindent\fbox{\parbox{0.95\linewidth}{\textbf{Finding:} #1}}\vspace{1.5ex}\par}
\newcommand{\parh}[1]{\vspace{0.3ex}\noindent\textbf{#1.}\hspace{0.5em}}

\begin{document}

\title{SkillReducer: Optimizing LLM Agent Skills for Token Efficiency}

\author{
\IEEEauthorblockN{Yudong Gao\textsuperscript{1}, Zongjie Li\textsuperscript{1}, Yuanyuan Yuan\textsuperscript{2}, Zimo Ji\textsuperscript{1}, Pingchuan Ma\textsuperscript{3}, Shuai Wang\textsuperscript{1}}
\IEEEauthorblockA{\textsuperscript{1}The Hong Kong University of Science and Technology \quad \textsuperscript{2}Tsinghua University \quad \textsuperscript{3}Zhejiang University of Technology}
\IEEEauthorblockA{\small\texttt{\{ygaodj, zligo, zjiag, shuaiw\}@cse.ust.hk} \quad \texttt{yyyuan@mail.tsinghua.edu.cn} \quad \texttt{pma@zjut.edu.cn}}
}

\maketitle

\begin{abstract}
LLM-based coding agents rely on \emph{skills}, pre-packaged instruction sets that extend agent capabilities, yet every token of skill content injected into the context window incurs both monetary cost and attention dilution.
To understand the severity of this problem, we conduct a large-scale empirical study of 55,315 publicly available skills and find systemic inefficiencies: 26.4\% lack routing descriptions entirely, over 60\% of body content is non-actionable, and reference files can inject tens of thousands of tokens per invocation.
Motivated by these findings, we present \textsc{SkillReducer}, a two-stage optimization framework.
Stage~1 optimizes the routing layer by compressing verbose descriptions and generating missing ones via adversarial delta debugging.
Stage~2 restructures skill bodies through taxonomy-driven classification and progressive disclosure, separating actionable core rules from supplementary content loaded on demand, validated by faithfulness checks and a self-correcting feedback loop.
Evaluated on 600 skills and the SkillsBench benchmark, \textsc{SkillReducer} achieves 48\% description compression and 39\% body compression while improving functional quality by 2.8\%, revealing a \emph{less-is-more} effect where removing non-essential content reduces distraction in the context window.
These benefits transfer across five models from four families with a mean retention of 0.965, and generalize to an independent agent framework.
\end{abstract}

\begin{IEEEkeywords}
LLM agents, skills, prompt optimization, token efficiency
\end{IEEEkeywords}

\section{Introduction}
\label{sec:intro}

Large language model (LLM) based coding agents, such as Claude Code~\cite{ClaudeCode2025}, Cursor~\cite{Miller2025Cursor}, and Windsurf, have become essential tools for software development. To customize agent behavior, these platforms support \emph{skills}. A skill is a pre-packaged instruction set containing domain-specific rules, code templates, and reference documentation. The design rationale behind skills is to save tokens: rather than repeating instructions in every conversation, a skill encapsulates reusable knowledge that the agent loads on demand. A growing ecosystem has emerged, with marketplaces like SkillHub hosting curated collections and community forums sharing thousands more.
Skills thus represent an emerging class of \emph{software artifacts}: they are authored, versioned, shared via marketplaces, and maintained by developers, yet they lack the mature optimization ecosystem that traditional source code enjoys.

Ironically, skills as currently authored often increase rather than decrease token consumption. A typical skill consists of two primary functional components: a brief description used by the agent runtime to route user queries, and a main body of instructions that is injected into the context window upon invocation. Because context tokens are expensive, this injection incurs significant costs. At current API pricing, a 10,000-token skill costs \$0.03 to \$0.15 per invocation, and teams with frequent agent usage can spend hundreds of dollars monthly on skill content alone. Our analysis of 55,315 publicly available skills reveals that this cost is systematically mismanaged across both components. At the routing layer, 26.4\% of skills lack descriptions entirely, breaking the routing mechanism and forcing agents to evaluate bodies blindly. Within the bodies, over 60\% of the content consists of non-actionable background or examples. Furthermore, reference-heavy skills can inject tens of thousands of tokens when only a fraction is task-relevant. This phenomenon of \emph{skill bloat}, analogous to software bloat~\cite{Quach2018Debloating}, inflates token costs without improving agent performance.

To address these inefficiencies, we present \textsc{SkillReducer}, a skill debloating framework that systematically removes unnecessary content while preserving functional quality. The design of \textsc{SkillReducer} is directly driven by the bipartite structure of skills, resulting in a two-stage optimization pipeline. Stage~1 targets the routing layer. Since routing equivalence is query-dependent and cannot be verified by a fixed test suite, this stage applies delta debugging~\cite{Zeller2002} with an adversarially constructed oracle to simulate routing decisions. This process compresses verbose descriptions to their 1-minimal form and generates missing ones to restore routing functionality. Stage~2 targets the skill body. Unlike source code, skill content is free-form natural language without syntactic boundaries to guide reduction. Therefore, we adapt program slicing principles using a taxonomy-driven classifier to structurally segment the text. This stage separates actionable core rules, which are always loaded, from examples and background context, which are converted into on-demand modules via progressive disclosure. This restructuring is validated by faithfulness checks and a task-based evaluation featuring a self-correcting feedback loop.

We evaluate our framework on 600 skills alongside the external SkillsBench benchmark, yielding several key findings. \textsc{SkillReducer} achieves significant token savings, with a 48\% mean compression rate for descriptions and a 39\% mean reduction for body tokens. The optimization preserves functional quality effectively, maintaining an 86.0\% pass rate on task-based evaluations and a 100\% pass rate on SkillsBench. Counter-intuitively, the compressed skills improve functional quality by 2.8\% over the originals, suggesting a \emph{less-is-more} effect where removing non-essential content reduces distraction in the context window, particularly for longer and more verbose skills. In a controlled baseline comparison at equivalent token budgets, \textsc{SkillReducer} significantly outperforms LLMLingua~\cite{Jiang2023LLMLingua}, direct LLM compression, truncation, and random removal. Furthermore, the functional benefits transfer across five models from four families and an independent agent framework, confirming the robustness of the optimized skills.

\textbf{Contributions.} In summary, this paper makes the following contributions:
\begin{itemize}
  \item An empirical study of 55,315 skills identifying systemic inefficiencies in the routing and body layers of the current LLM agent ecosystem (Section~\ref{sec:empirical}).
  \item A two-stage optimization framework, \textsc{SkillReducer}, that combines delta-debugging-based description compression with taxonomy-driven progressive disclosure to mitigate skill bloat (Section~\ref{sec:approach}).
  \item A comprehensive evaluation demonstrating substantial token reduction with preserved or improved functional quality, highlighting a less-is-more effect in LLM context management (Section~\ref{sec:evaluation}).
  \item An open-source tool implementing the full debloating pipeline, intended as a build-time preprocessing step for skill authors (to be released upon acceptance).
\end{itemize}

%% ============================================================
\section{Background}
\label{sec:background}

\subsection{LLM Agent Skills}
\label{sec:bg_skills}

Modern LLM-based coding agents support a \emph{skill} mechanism that lets users extend agent behavior with pre-packaged instructions.
A skill $s$ consists of four components.
The \textbf{description} ($s.d$) is a short text used by the agent runtime to decide whether to invoke the skill for a given user request, serving as the routing layer.
The \textbf{body} ($s.b$) is the main instruction document, typically a Markdown file, injected into the agent's context window upon invocation.
\textbf{References} ($s.R$) are optional files such as documentation, code templates, or API specifications loaded alongside the body.
\textbf{Scripts} are executable code that the agent can invoke as tools; they do not consume context tokens in the same way and are out of scope for this work.

When a user issues a request, the agent runtime matches it against all available skill descriptions and selects the best-matching skill.
Upon selection, the skill's body and references are injected into the context window, consuming tokens for the duration of the interaction.

Skills can follow a \emph{monolithic} architecture, where all content resides in a single file and is loaded at once, or a \emph{tiered} architecture, where a compact core is always loaded and supplementary modules are fetched on demand via tool calls such as \texttt{read\_file}.
The tiered pattern follows the principle of \emph{progressive disclosure}~\cite{Nielsen2006}: only essential instructions occupy the context window by default, while examples and background remain available when needed.
Stage~2 of \textsc{SkillReducer} automates the transformation from monolithic to tiered architecture (Section~\ref{sec:stage2}).

\subsection{Context Window Constraints and Prompt Compression}
\label{sec:bg_compression}

LLMs operate within a fixed context window (typically 128K--200K tokens) that must accommodate all inputs: system prompts, skill content, conversation history, and codebase context.
A skill with a large body and references can consume tens of thousands of tokens per invocation, and when multiple skills are active simultaneously, their cumulative cost can dominate the context budget.
Several families of techniques reduce the token footprint of LLM inputs: token-level pruning methods such as LLMLingua~\cite{Jiang2023LLMLingua} remove low-perplexity tokens, embedding-level approaches such as Gisting~\cite{Mu2024Gisting} compress prompts into learned soft tokens, and summarization-based methods use an LLM to directly rewrite content into shorter form.
However, skill content presents a distinct challenge: unlike free-form text, skills mix actionable rules, illustrative examples, and explanatory background within a single document, and compression must preserve both routing correctness and task performance.
These constraints motivate a structure-aware approach rather than uniform token-level or sentence-level pruning.

\subsection{Delta Debugging}
\label{sec:bg_dd}

Delta debugging~\cite{Zeller2002} is a test-case minimization algorithm originally designed to isolate failure-inducing inputs in software.
Given a set of changes $U$ and a monotone predicate $\mathcal{O}$, the \textsc{ddmin} algorithm finds a 1-minimal subset $U^* \subseteq U$ such that $\mathcal{O}(U^*)$ holds and no proper subset of $U^*$ satisfies $\mathcal{O}$.
The algorithm works by binary partitioning: it splits $U$ into halves, tests whether each half alone satisfies $\mathcal{O}$, and recursively discards unnecessary elements.
The result is a minimal set where every remaining element is individually necessary.
Stage~1 of \textsc{SkillReducer} adapts this technique by treating semantic clauses in a routing description as changes and routing correctness as the predicate (Section~\ref{sec:stage1}). This formulation is natural because a skill description, like a failing test case, can be decomposed into independent semantic units, and the goal is to find the smallest subset that preserves routing behavior.

%% ============================================================
\section{Empirical Study}
\label{sec:empirical}

To motivate \textsc{SkillReducer}, we conduct an empirical study across three skill sources:
(1)~\emph{Wild}---55,315 user-created skills crawled from public GitHub repositories (filtered for valid \texttt{SKILL.md} files, deduplicated by content hash);
(2)~\emph{SkillHub}---100 curated skills from the SkillHub marketplace, each with editorial review;
(3)~\emph{Community}---620 skills shared on developer forums and social platforms.
Detailed dataset statistics are provided in Appendix~\ref{app:datasets}.

%% ------------------------------------------------------------
\subsection{Description Quality}
\label{sec:rq1}

The description serves as the primary routing signal: the agent runtime matches user requests against descriptions to select which skill to invoke.
We tokenize all descriptions and bodies using the \texttt{cl100k\_base} tokenizer and report distributions in Figure~\ref{fig:b1_token}.

Our analysis reveals a bimodal quality problem.
On one end, 26.4\% of 55,315 Wild skills have no description at all, and 44.1\% are either missing or under 20 tokens.
These descriptions are still loaded into the router's candidate pool on every invocation, consuming tokens, yet the router can never match them to relevant user requests, so the tokens are wasted on every call.
On the other end, many descriptions that do exist are unnecessarily verbose: SkillHub descriptions average 47.78 tokens, but manual inspection reveals substantial non-routing content such as exhaustive feature lists, redundant trigger phrase enumerations, and usage examples that do not help the router distinguish the skill from competitors.

\begin{figure}[t]
  \centering
  \includegraphics[width=\linewidth]{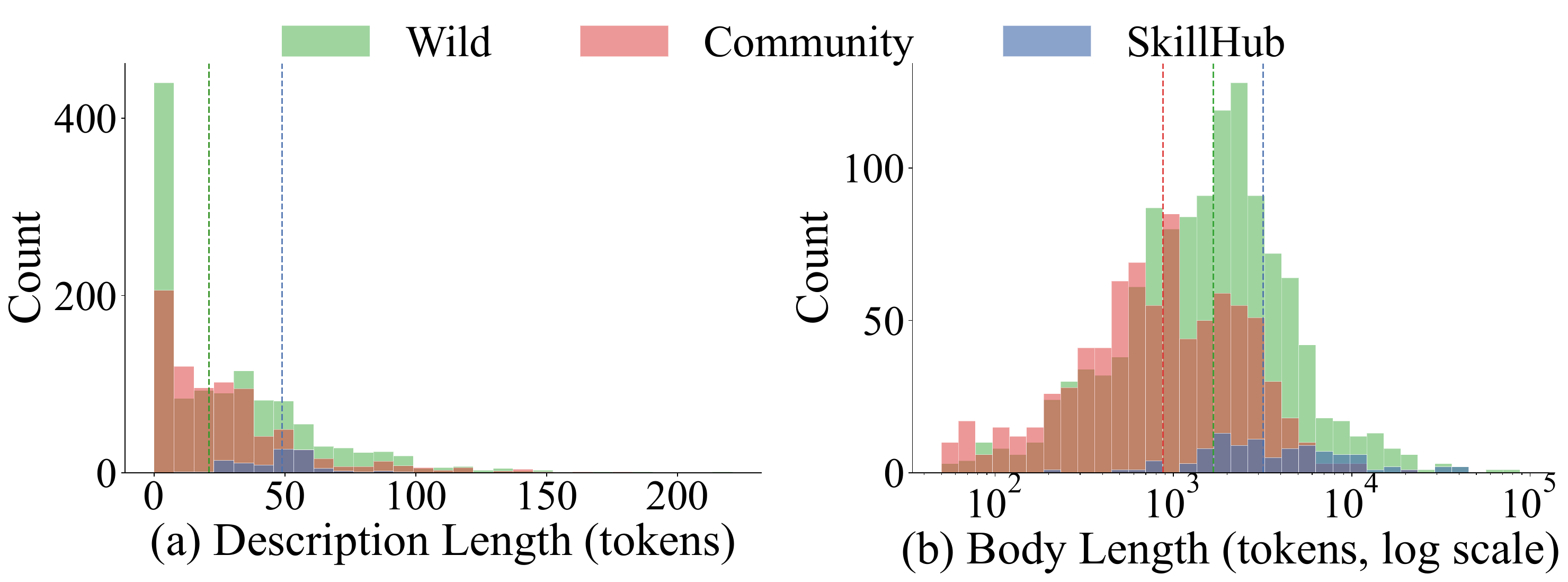}
  \caption{Token-length distributions of descriptions and skill bodies across three sources.}
  \label{fig:b1_token}
\end{figure}

\finding{Descriptions are poorly calibrated: 26.4\% are missing or too short, causing routing failure and wasted token budget; many others contain verbose non-routing content that can be trimmed without affecting routing accuracy.}

%% ------------------------------------------------------------
\subsection{Body Content Taxonomy}
\label{sec:rq2}

Even when correctly routed, a skill's full body is injected into the context window.
Following established content analysis methodology~\cite{Krippendorff2004}, we draw a stratified sample of 90 skills (30 per source), exceeding the saturation threshold of 12--30 samples per stratum identified in prior empirical work~\cite{Guest2006Saturation}.
This yields $n{=}15{,}107$ paragraph-level items, which we classify into five categories: core rule (actionable instructions), background (explanations, rationale), example (code snippets, I/O pairs), template (boilerplate), and redundant (repeated content).
Classification uses an LLM classifier (DeepSeek-V3 at temperature 0) with structured output, followed by a cross-validation pass where each item's label is re-evaluated in context of its neighbors' labels (up to 2 rounds).

Table~\ref{tab:taxonomy} shows the aggregate results: only 38.5\% of items are core rules, while 40.7\% are background and 12.9\% are examples.
To validate this taxonomy, we embed all 15,107 items and apply GMM clustering ($k=5$, selected by silhouette score; see Appendix~\ref{app:gmm} for the model selection analysis).
Figure~\ref{fig:b2_cluster} shows the UMAP projection: five well-separated clusters emerge that align with our LLM-assigned categories, suggesting that the distinction between content types is structurally grounded in the text rather than purely an artifact of the classifier (though we note the silhouette score of 0.393 indicates moderate separation).

\begin{table}[t]
\centering
\caption{Distribution of content types across skill bodies.}
\label{tab:taxonomy}
\resizebox{\columnwidth}{!}{%
\begin{tabular}{lccccc}
\toprule
 & Core Rule & Background & Example & Template & Redundant \\
\midrule
Count & 5,817 & 6,156 & 1,948 & 1,141 & 45 \\
\% & 38.5 & 40.7 & 12.9 & 7.6 & 0.3 \\
\bottomrule
\end{tabular}}
\end{table}

\begin{figure}[t]
  \centering
  \includegraphics[width=0.85\linewidth]{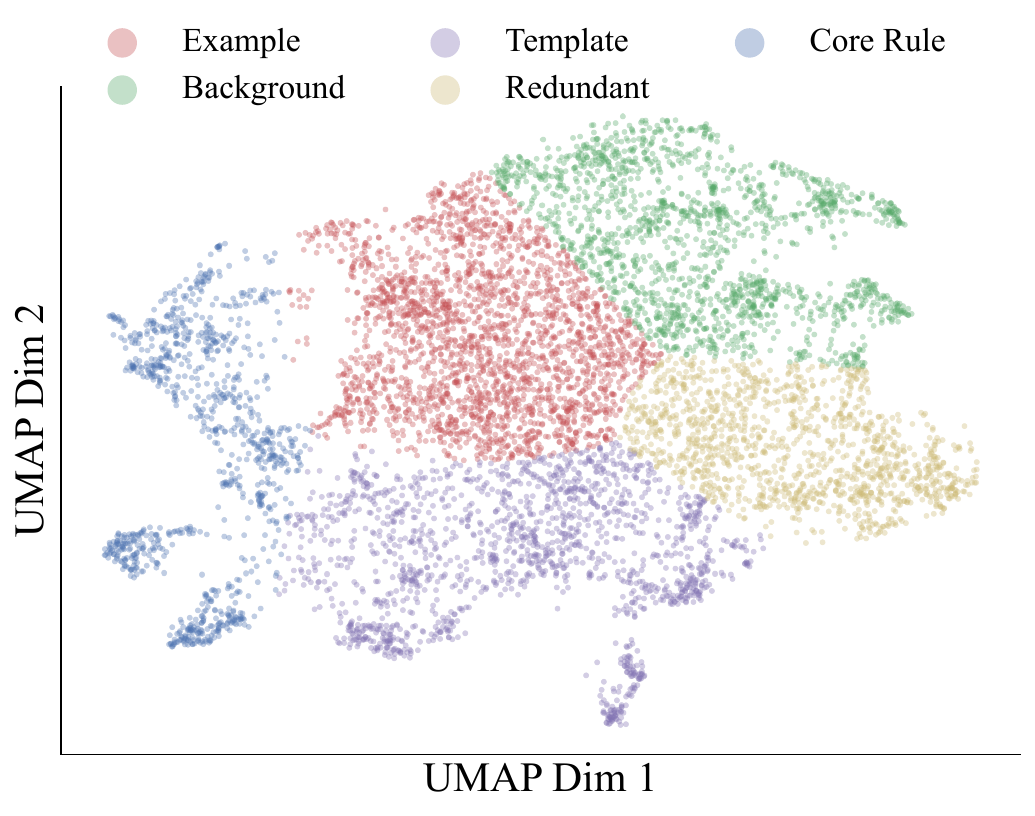}
  \caption{UMAP projection of skill body items with GMM clustering ($k=5$). Five content clusters emerge, suggesting that body content naturally separates into the taxonomy categories (silhouette = 0.393).}
  \label{fig:b2_cluster}
\end{figure}

\finding{Only 38.5\% of skill body content is actionable core rules. Over 60\% is background, examples, or templates that is injected into the context window regardless of task relevance.}

%% ------------------------------------------------------------
\subsection{File Composition}
\label{sec:rq3}

Skills may bundle reference files and scripts alongside the main body.
Among 55,315 Wild skills, 67.5\% consist of a single \texttt{SKILL.md}, but 14.8\% include reference files and 8.3\% include scripts.
For the SkillHub skills with references, the cost is substantial: 100 skills collectively contain 505 files totaling 1.67M tokens, meaning a single skill invocation can inject tens of thousands of reference tokens even when only a fraction is relevant.

\finding{While most skills are single-file, the 14.8\% with references can inject massive token volumes (1.67M tokens across 100 skills) regardless of whether the current task needs them.}

\smallskip
\noindent
These three findings share a common root cause: the absence of separation of concerns in skill authoring.
Authors mix specification (core rules), documentation (background and examples), and data (templates and references) into monolithic files, with no layered architecture to distinguish what the agent must know from what it might need.
Section~\ref{sec:approach} frames this as an automated separation-of-concerns problem and addresses each manifestation with a targeted optimization stage.

%% ============================================================
\section{Approach}
\label{sec:approach}

Guided by the three findings in Section~\ref{sec:empirical}, we propose \textsc{SkillReducer}, a skill debloating framework inspired by established SE practices.
Just as software debloating~\cite{Quach2018Debloating} removes unused code paths to reduce attack surface and binary size, \textsc{SkillReducer} removes non-essential content from skills to reduce token cost.
Table~\ref{tab:requirements} summarizes how each empirical finding drives a specific design requirement.

\begin{table*}[t]
\centering
\caption{From empirical findings to design requirements.}
\label{tab:requirements}
\begin{tabular}{@{}lp{0.88\textwidth}@{}}
\toprule
\textbf{F1} & \textbf{Symptom:} 26.4\% of descriptions missing or too short; others contain verbose non-routing content. \textbf{Consequence:} Too short $\to$ router never selects the skill (tokens wasted on every call); too long $\to$ non-routing filler wastes tokens. \textbf{Requirement:} Minimally sufficient descriptions: generate missing ones; compress verbose ones to only routing-essential content. $\to$ \textbf{Stage~1} \\
\midrule
\textbf{F2} & \textbf{Symptom:} Only 38.5\% of body content is actionable core rules; 60\%+ is background, examples, or templates. \textbf{Consequence:} Non-core content wastes tokens and can distract the agent from core instructions. \textbf{Requirement:} Content-aware separation: core rules always loaded; non-core deferred to on-demand modules. $\to$ \textbf{Stage~2} \\
\midrule
\textbf{F3} & \textbf{Symptom:} Reference files loaded in full regardless of task relevance (up to 1.67M tokens across 100 skills). \textbf{Consequence:} Massive token injection even when only a fraction is needed for the current task. \textbf{Requirement:} Task-dependent loading: deduplicate body--reference overlap; add routing metadata for selective loading. $\to$ \textbf{Stage~2} \\
\bottomrule
\end{tabular}
\end{table*}

The key challenge across all three requirements is that skill content is natural language with implicit structure: unlike code where syntax enables precise slicing, skill bodies interleave rules, examples, and explanations in free-form Markdown.
\textsc{SkillReducer} addresses each finding with a targeted optimization stage (Figure~\ref{fig:overview}).
First, Stage~1 applies delta-debugging-based description minimization to produce minimally sufficient routing descriptions (Section~\ref{sec:stage1}).
Second, Stage~2 employs taxonomy-driven content classification and type-specific compression to separate core rules from non-actionable content (Section~\ref{sec:stage2}).
Finally, Stage~2 further performs cross-file deduplication and reference annotation to eliminate body--reference overlap and enable task-dependent loading (Section~\ref{sec:crossfile}).

Before detailing each stage, we formalize the optimization objective. A skill $s$ consists of a description $s.d$, a body $s.b$, and optional reference files $s.R = \{r_1, \ldots, r_m\}$. The total token cost of invoking $s$ is:
\begin{equation}
  \label{eq:cost}
  \text{Cost}(s) = |s.d| + |s.b| + \sum_{r \in s.R} |r|
\end{equation}
where $|\cdot|$ denotes token count.
\textsc{SkillReducer} produces an optimized skill $s'$ with $\text{Cost}(s') \ll \text{Cost}(s)$ (Eq.~\ref{eq:cost}), subject to two constraints:
(1)~\emph{routing equivalence}: $s'$ is selected for the same queries as $s$;
(2)~\emph{functional retention}: an agent using $s'$ achieves the same task performance as with $s$.

\begin{figure*}[t]
  \centering
  \includegraphics[width=0.94\textwidth]{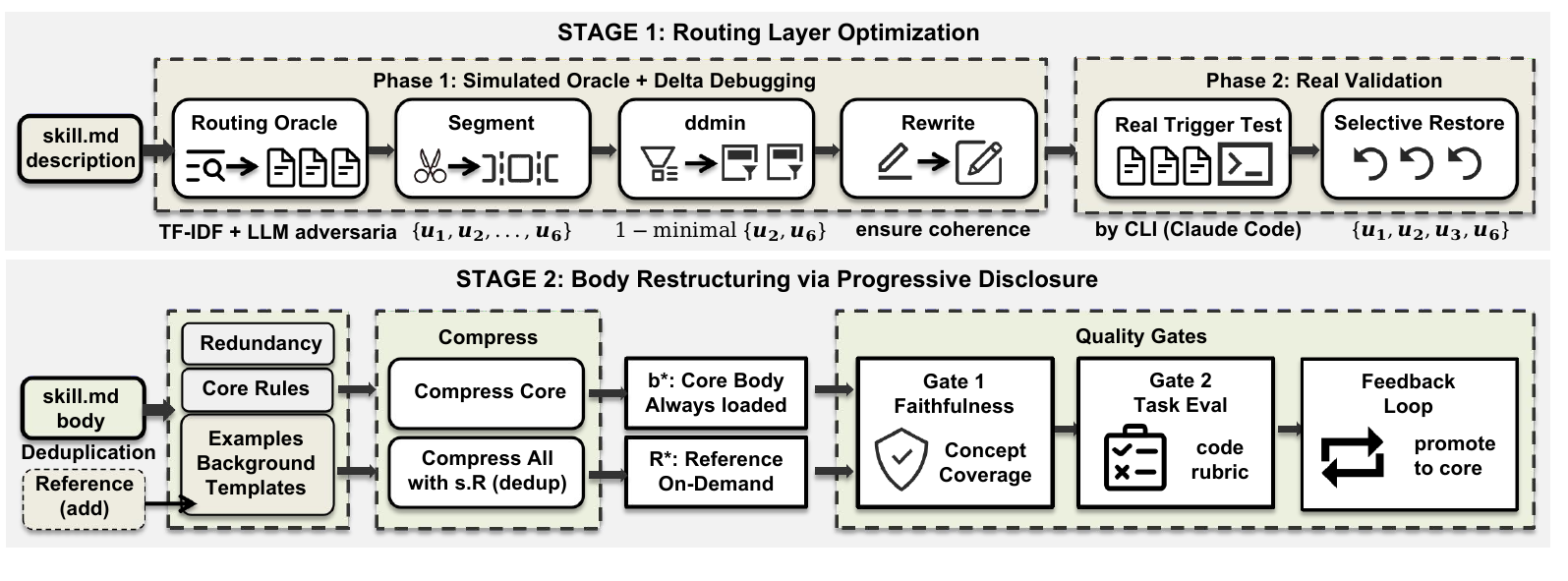}
  \vspace{-8mm}
  \caption{Overview of \textsc{SkillReducer}. Stage~1 (top) compresses routing descriptions via simulated-oracle-driven delta debugging followed by real-environment validation. Stage~2 (bottom) classifies body content, applies type-specific compression, deduplicates references, and validates through faithfulness and task-based quality gates with a feedback loop.}
  \label{fig:overview}
\end{figure*}

%% ------------------------------------------------------------
\subsection{Stage 1: Routing Layer Optimization}
\label{sec:stage1}

The routing layer matches user requests against skill descriptions to decide which skill to invoke.
Finding~1 (Section~\ref{sec:rq1}) reveals a bimodal problem: descriptions that are too short or missing cause routing failure and token waste, while verbose descriptions consume tokens on non-routing content.
Stage~1 addresses both ends: generating descriptions for skills that lack them, and compressing verbose ones to their minimally sufficient form, retaining only the content necessary to distinguish the skill from competitors.
The compression pipeline uses a two-phase design: a fast simulated oracle drives delta debugging, followed by real-environment validation with selective recovery (Algorithm~\ref{alg:stage1}).

\parh{Phase 1: Delta Debugging with Simulated Oracle}\label{sec:stage1_phase1}
Central to Phase~1 is a \emph{simulated routing oracle} $\mathcal{O}(d, Q, C) \to \{0, 1\}$ that tests whether description~$d$ enables correct skill selection.
The oracle takes test queries $Q = \{q_1, \ldots, q_k\}$ and a candidate pool~$C$ containing the target skill, four distractors selected by TF-IDF cosine similarity (name + description), and one adversarial skill $s_{\text{adv}}$.
The distractors are chosen by similarity because in practice, only semantically close skills can cause routing confusion, while dissimilar skills pose no challenge to the router.
To further stress-test the description, we additionally generate $s_{\text{adv}}$ via an LLM: a ``shadow'' skill that is topically similar but functionally distinct, designed to be confusable when the description is vague.
Concretely, the LLM is prompted with the target's name and description and asked to produce a plausible skill in the same domain but with a different purpose (e.g., given a ``JWT authentication'' skill, generating an ``OAuth 2.0 token refresh'' skill); Section~\ref{sec:stage1_results} validates that this adversarial construction meaningfully increases oracle difficulty.
The oracle returns~1 if and only if the routing model consistently selects the target skill for every query in $Q$. To ensure this selection is based on semantic merit rather than positional bias, the order of candidates in $C$ is randomized for each query.

Using this oracle, we apply delta debugging~\cite{Zeller2002} over \emph{semantic clauses} $U = \{u_1, \ldots, u_n\}$, where an LLM segments the description so that each clause captures one coherent routing-relevant concept (e.g., ``JWT authentication'', ``OAuth 2.0 support'').
Unlike sentence-level splitting, which is too coarse (a single sentence may bundle multiple independent signals), or word-level splitting, which is too fine (individual words rarely carry standalone routing meaning), semantic clauses provide a natural granularity that balances precision with tractability.
The \textsc{ddmin} algorithm then follows a partition-and-test process: it recursively splits the clause set, tests whether each subset alone still passes the oracle, and discards unnecessary clauses until the remaining set $U^* \subseteq U$ is 1-minimal, meaning every clause is individually necessary for correct routing.
This process requires $O(n \log n)$ oracle calls rather than $O(2^n)$ for exhaustive search.
Each retained unit is then individually paraphrased to a shorter form (accepted only if routing still passes), followed by a polish step for grammatical coherence.

Notably, the simulated oracle has an inherent limitation: it tests routing in a controlled candidate pool rather than in the real agent runtime, and may therefore be overly optimistic about which clauses can be safely removed.
However, its speed (${\sim}2$s per call) makes it practical as a fast screening mechanism for \textsc{ddmin}'s iterative search.
To address this, Phase~2 performs real-environment validation that catches cases where the simulated oracle is too optimistic.

\parh{Phase 2: Real-Environment Validation}\label{sec:stage1_phase2}
To bridge the gap between simulated and real routing, Phase~2 validates the compressed description by testing whether the actual agent runtime triggers the skill for the same queries.
We deploy the skill with its compressed description into the agent's skill directory and issue each query through the Claude Code CLI, parsing stream events to detect whether the agent invokes the skill.

We first establish a baseline: queries that trigger with the original description form the validation set $Q_{\text{val}} \subseteq Q$.
If the compressed description triggers for all $q \in Q_{\text{val}}$, it is accepted.
Otherwise, we perform selective restore: from the units deleted by \textsc{ddmin}, we greedily add back the unit that most improves the trigger rate, repeating until all queries in $Q_{\text{val}}$ trigger (or a maximum of three restore steps).
If restore fails, the original description is used as fallback.
In Algorithm~\ref{alg:stage1}, \textsc{RealTrigger}$(q, d)$ returns 1 iff deploying description $d$ as a skill and issuing query $q$ through the Claude Code CLI causes the agent to invoke the skill (detected via stream event parsing).

\begin{algorithm}[t]
\caption{Stage 1: Two-Phase Description Compression}
\label{alg:stage1}
\begin{algorithmic}[1]
\Require Skill $s$ with description $s.d$, queries $Q$
\Ensure Compressed description $d^*$
\Statex \Comment{\textbf{Phase 1: Fast compression (simulated oracle)}}
\State $U \leftarrow \textsc{Segment}(s.d)$ \Comment{Split into semantic units}
\State $U^* \leftarrow \textsc{ddmin}(U, \mathcal{O}_{\text{sim}}, Q)$ \Comment{1-minimal subset}
\For{each $u_i \in U^*$} \Comment{Selective rewriting}
  \State Try shorter paraphrase; keep if $\mathcal{O}_{\text{sim}}$ passes
\EndFor
\State $d_{\text{fast}} \leftarrow \textsc{Polish}(\textsc{Join}(U^*))$
\Statex \Comment{\textbf{Phase 2: Validation (real agent trigger)}}
\State $Q_{\text{val}} \leftarrow \{q \in Q \mid \textsc{RealTrigger}(q, s.d) = 1\}$
\If{$\forall q \in Q_{\text{val}}: \textsc{RealTrigger}(q, d_{\text{fast}}) = 1$}
  \State \Return $d_{\text{fast}}$ \Comment{Compression validated}
\EndIf
\State $D \leftarrow U \setminus U^*$ \Comment{Deleted units}
\For{$i = 1$ to $3$} \Comment{Selective restore}
  \State $u^+ \leftarrow \arg\max_{u \in D} \textsc{TriggerRate}(U^* \cup \{u\}, Q_{\text{val}})$
  \State $U^* \leftarrow U^* \cup \{u^+\}$; $D \leftarrow D \setminus \{u^+\}$
  \If{all $Q_{\text{val}}$ trigger}
    \State \Return $\textsc{Join}(U^*)$
  \EndIf
\EndFor
\State \Return $s.d$ \Comment{Fallback to original}
\end{algorithmic}
\end{algorithm}

Section~\ref{sec:stage1_results} evaluates this two-phase design on 600 skills, where the selective restore mechanism achieves 100\% routing preservation across all non-obsolete skills.

\parh{Description Generation}\label{sec:stage1_generate}
The two-phase pipeline above handles skills with existing descriptions. For the 26.4\% of skills with missing or short descriptions ($\leq$40 tokens), we instead generate a description from the skill body.
The LLM extracts three routing signals from the body, each in 20--40 tokens: (1)~primary capability, (2)~trigger condition, and (3)~unique identifiers (library names, API endpoints, etc.).
The generated description is validated through Phase~1's simulated oracle before acceptance; Phase~2 real-trigger validation is applied to a sampled subset (Section~\ref{sec:stage1_results}).

%% ------------------------------------------------------------
\subsection{Stage 2: Body Restructuring via Progressive Disclosure}
\label{sec:stage2}

Even after routing optimization, the full skill body is injected into the agent's context upon invocation.
Our taxonomy analysis (Section~\ref{sec:rq2}) shows that only 38.5\% of body content is actionable.
In practice, most skills follow the monolithic pattern: authors embed reference-like content (examples, templates, background documentation) directly in the body rather than separating it into dedicated reference files.
Stage~2 automatically identifies and separates such content, effectively transforming monolithic skills into the tiered architecture described in Section~\ref{sec:bg_skills}.
Stage~2 restructures each skill from a monolithic document into a \emph{core module} $b^*$ (always loaded) and \emph{on-demand reference modules} $R^* = \{r^*_1, \ldots, r^*_n\}$ (loaded only when the agent explicitly requests them via a \texttt{read\_file} tool call).
The optimized cost becomes:
\begin{equation}
  \label{eq:cost_opt}
  \text{Cost}'(s) = |s'.d| + |b^*| + \sum_{r \in R^*_{\text{used}}} |r|
\end{equation}
where $R^*_{\text{used}} \subseteq R^*$ are only the references the agent actually loads for a given task.
Algorithm~\ref{alg:stage2} describes the full pipeline.

\begin{algorithm}[t]
\caption{Stage 2: Progressive Disclosure}
\label{alg:stage2}
\begin{algorithmic}[1]
\Require Skill $s$ with body $s.b$, references $s.R$
\Ensure Core body $b^*$, reference modules $R^*$
\Statex \Comment{\textbf{Step 1: Content Classification}}
\State $\mathcal{I} \leftarrow \textsc{Classify}(s.b)$ \Comment{Assign each item a type}
\State $I_{\text{core}} \leftarrow \{x \in \mathcal{I} \mid x.\text{type} = \text{core\_rule}\}$
\State $I_{\text{ex}}, I_{\text{tp}}, I_{\text{bg}} \leftarrow$ items by type
\Statex \Comment{\textbf{Step 2: Type-Specific Compression}}
\State $b^* \leftarrow \textsc{CompressCore}(I_{\text{core}})$
\State $r_{\text{ex}} \leftarrow \textsc{Dedup}(I_{\text{ex}})$ \Comment{1 best example/concept}
\State $r_{\text{tp}} \leftarrow \textsc{Dedup}(I_{\text{tp}})$
\State $r_{\text{bg}} \leftarrow \textsc{Summarize}(I_{\text{bg}})$
\Statex \Comment{\textbf{Step 3: Cross-File Deduplication}}
\For{each $r \in s.R$}
  \State $r' \leftarrow \textsc{RemoveOverlap}(r, s.b)$
  \If{$|r'| \geq 30$}
    \State Add $\textsc{Compress}(r')$ to $R^*$
  \EndIf
\EndFor
\State $R^* \leftarrow R^* \cup \{r_{\text{ex}}, r_{\text{tp}}, r_{\text{bg}}\}$ \Comment{Non-empty only}
\Statex \Comment{\textbf{Step 4: Annotate References}}
\For{each $r \in R^*$}
  \State $r.\text{when} \leftarrow \textsc{GenWhen}(r)$ \Comment{Trigger clause}
  \State $r.\text{topics} \leftarrow \textsc{GenTopics}(r)$ \Comment{3--5 keywords}
\EndFor
\Statex \Comment{\textbf{Step 5: Quality Gates}}
\If{$\neg\textsc{Gate1}(s.b, b^*)$} \Comment{Faithfulness}
  \State Rollback failed content types
\EndIf
\For{$i = 1$ to $2$} \Comment{Gate 2 + Feedback}
  \State $\text{ret} \leftarrow \textsc{Gate2}(s, b^*, R^*)$ \Comment{Task eval}
  \If{$\text{ret} = 1.0$}
    \State \textbf{break}
  \EndIf
  \State $P \leftarrow \textsc{Analyze}(\text{failed tasks})$ \Comment{Promote}
  \State $I_{\text{core}} \leftarrow I_{\text{core}} \cup P$; recompress $b^*, R^*$
\EndFor
\State \Return $b^*, R^*$
\end{algorithmic}
\end{algorithm}

\parh{Content Classification}\label{sec:classify}
The first step segments each skill body into paragraph-level items and classifies them into five categories from our taxonomy (Table~\ref{tab:taxonomy}):
core rule (actionable instructions the agent must follow), background (explanations and rationale), example (illustrative code snippets), template (boilerplate for copy-paste), and redundant (repeated content).
Only core rules remain in the always-loaded module; the other categories are separated into on-demand reference modules (\texttt{examples.md}, \texttt{templates.md}, \texttt{background.md}), and redundant items are discarded.
This separation is analogous to \emph{program slicing}~\cite{Weiser1984}: we identify which content items directly affect task execution and defer the rest.
Unlike classical slicing, skill content lacks formal dependency graphs, so we approximate the analysis with a taxonomy-driven LLM classifier and rely on the feedback loop (Section~\ref{sec:feedback}) to empirically recover missed dependencies.
If the LLM fails to produce a valid classification after three retries, the item defaults to core rule as a conservative fallback.

\parh{Type-Specific Compression}\label{sec:compress}
After classification, each content type undergoes tailored compression.
Core rules are merged by semantic similarity and tightened into concise bullet points, with the constraint that the compressed core must be strictly shorter than the input.
Examples are grouped by concept, retaining only the most representative example per concept and stripping comments and boilerplate, which typically reduces example count by 60--70\%.
Templates are deduplicated by concept in the same manner.
Background content is summarized into a concise paragraph while preserving all unique factual claims such as numbers, thresholds, and API endpoints.

\parh{Cross-File Deduplication}\label{sec:crossfile}
For skills with external reference files $s.R$ (14.8\% of skills; Section~\ref{sec:rq3}), we perform overlap detection between each reference $r$ and the full original body $s.b$ (lines~8--12), removing content that is already captured in either the core module or the body-derived reference modules.
An LLM identifies concepts that appear in both, removes duplicated content from $r$, and compresses the remaining unique content.
References that shrink below 30 tokens are discarded as fully redundant.
Each surviving reference is annotated with routing metadata (lines~13--15):
a ``when'' clause describing the trigger condition (e.g., ``you need to write YAML configuration'') and 3--5 topic keywords extracted from the content.
These annotations enable the agent to decide whether to load a reference without reading it first.

\parh{Gate 1: Faithfulness Verification}\label{sec:gate1}
After compression, we verify structural faithfulness (line~16): an LLM is given the original body and the compressed core, and checks whether all key operational concepts are preserved.
Gate~1 operates per content type: each type (core rules, examples, templates, background) is checked independently, and only failing types are rolled back to their original content.
Formally, let $\mathcal{C}_\tau(b)$ denote the set of operational concepts of type $\tau$ in body $b$. Gate~1 requires:
\begin{equation}
  \label{eq:faithful}
  \forall \tau:\; \mathcal{C}_\tau(s.b) \subseteq \mathcal{C}_\tau(b^*) \cup \bigcup_{r \in R^*}\mathcal{C}_\tau(r)
\end{equation}
where the union accounts for concepts moved from the core to reference modules.
This structural check is necessary but not sufficient, since a compressed body may preserve all concepts yet still confuse an agent in practice (see Appendix~\ref{app:guarantee} for the full quality safeguard design).

\parh{Gate 2: Task-Based Evaluation}\label{sec:gate2}
Gate~2 evaluates functional quality by testing whether an LLM agent can still complete real tasks using the compressed skill (lines~18--23).
We generate five diverse evaluation tasks per skill, mixing ``core-only'' tasks (answerable from $b^*$ alone) and ``needs-reference'' tasks (requiring at least one reference module).
Each task includes a rubric of required criteria derived from the original body.

The agent executes each task under three conditions:
\begin{itemize}
  \item \textbf{Condition D} (lower bound): no skill content.
  \item \textbf{Condition A} (baseline): original body $s.b$ and all references $s.R$.
  \item \textbf{Condition C} (compressed): core body $b^*$, with references $R^*$ accessible via \texttt{read\_file}.
\end{itemize}

Under Condition~C, the agent is equipped with a \texttt{read\_file} tool, natively supported by the skill platform's tiered architecture (Section~\ref{sec:bg_skills}), that returns the content of a requested reference module.
The agent autonomously decides which (if any) references to load based on the when/topics annotations, up to a maximum of six tool calls per task.

Responses are scored via two mechanisms depending on task type: code execution tasks (52.3\% of all Gate~2 evaluations) run the agent's output as Python code and verify correctness via assertion-based test scripts (analogous to pytest), while rubric tasks (47.7\%) are scored by an LLM judge against the rubric criteria.
This hybrid design ensures that over half of all assessments are deterministic and LLM-judge-independent.
To further mitigate LLM evaluator bias, we use separate models for compression (DeepSeek-V3) and evaluation (Qwen3.5~\cite{Qwen2025Qwen3}), and validate the LLM judge against the deterministic verifier on 1,646 overlapping tasks (Cohen's $\kappa{=}0.939$; Section~\ref{sec:threats}).
We define the \emph{retention} metric for task $t$ as:
\begin{equation}
  \label{eq:retention}
  \text{Retention}(t) = \begin{cases}
    1.0 & \text{if } \text{score}_A(t) = 0 \\
    \min\!\left(\frac{\text{score}_C(t)}{\text{score}_A(t)},\; 1.0\right) & \text{otherwise}
  \end{cases}
\end{equation}
When $\text{score}_A = 0$ (the original skill already fails), retention defaults to 1.0 since there is no baseline to degrade.
19.8\% of skills are ``ceiling'' cases where $D{=}A{=}C{=}1.0$ (the task is too easy for all conditions); these are excluded in our non-ceiling analysis (Section~\ref{sec:gate2_results}).
A skill passes Gate~2 if $\text{Retention}(t) = 1.0$ for all tasks $t$.

\parh{Feedback Loop}\label{sec:feedback}
If Gate~2 reveals tasks where $\text{score}_C < \text{score}_A$ (lines~22--23), the system analyzes the failed rubric criteria and identifies which non-core items (examples, background, or templates) are needed to satisfy them.
These items are promoted to core rules: their classification is changed to \texttt{core\_rule} and they are included in the always-loaded module.
Crucially, promoted items are appended to the core in their original form without further compression, preserving the content that Gate~2 identified as necessary; only the non-promoted core items retain their compressed form.
Gate~1 is not re-run after promotion because the core set only grows ($I_{\text{core}}^{(i)} \subseteq I_{\text{core}}^{(i+1)}$), so the faithfulness property (Eq.~\ref{eq:faithful}) is maintained.
Gate~2 is then re-run on the expanded core.
This loop executes at most twice; if retention remains below 1.0 after two iterations, the current best compression is used.

This self-correcting mechanism ensures that aggressive compression does not silently degrade task performance: the system trades a small increase in token cost (promoted items) for full functional retention.

The loop is guaranteed to terminate: the core set $I_{\text{core}}^{(i)}$ grows monotonically ($I_{\text{core}}^{(i)} \subseteq I_{\text{core}}^{(i+1)}$) within a finite item set $\mathcal{I}$, reaching a fixpoint in at most $|\mathcal{I}| - |I_{\text{core}}^{(0)}|$ steps (formal proof in Appendix~\ref{app:theory}).
In practice, we cap the loop at two iterations to bound computational cost.

\parh{Running Example}\label{sec:example}
We illustrate the full pipeline on the \texttt{marketing\allowbreak-strategy-pmm} skill (product marketing, positioning, GTM).
Stage~1 compresses the description from 87 to 32 tokens (63\% reduction): \textsc{ddmin} identifies that the trigger-phrase list is redundant because the router can infer triggers from the feature keywords alone.
Stage~2 classifies the 2,543-token body into core rules (KPIs, methodology steps, decision criteria), background (workflow explanations), examples (persona-specific messaging), and templates (HubSpot configuration snippets).
The core is compressed to 540 tokens (79\% reduction), and three on-demand reference modules are created: \texttt{templates.md} (327 tok), \texttt{examples.md} (684 tok), and \texttt{background.md} (602 tok).
End-to-end, invocation cost drops from 12,019 to 540 tokens for core-only tasks, or to 7,231 tokens if all references are loaded.
Gate~2 confirms $\text{score}_C{=}1.0$ vs.\ $\text{score}_A{=}0.93$, demonstrating the less-is-more effect.
A detailed walkthrough with before/after excerpts is provided in Appendix~\ref{app:example}.

%% ============================================================
\section{Evaluation}
\label{sec:evaluation}

We evaluate \textsc{SkillReducer} on 600 skills (87 official from SkillHub, 464 community, and 49 wild sampled from GitHub) and one external benchmark.
Our experiments answer four research questions:
\textbf{RQ1}: How much token reduction does \textsc{SkillReducer} achieve?
\textbf{RQ2}: Does compression preserve functional quality?
\textbf{RQ3}: What components drive the quality preservation?
\textbf{RQ4}: Do the results generalize across models and frameworks?

%% ------------------------------------------------------------
\subsection{Experimental Setup}
\label{sec:setup}

\noindent\textbf{Datasets.}
Unlike the empirical study (Section~\ref{sec:empirical}), which analyzes 55,315 skills for distributional patterns, the evaluation requires running each skill through the full compression and evaluation pipeline, which is substantially more expensive.
We therefore curate a balanced evaluation set from three sources, filtering out low-quality and duplicate content:
(1)~\emph{Official} (87 evaluable skills from the 100 SkillHub skills; 13 excluded due to missing bodies or unparseable content),
(2)~\emph{Community} (464 community-shared skills from the 620 in our empirical study; 156 excluded due to empty bodies or duplicate content), and
(3)~\emph{Wild} (49 skills sampled from the 55,315 GitHub skills, stratified by body length to cover the full size spectrum).
Additionally, we use \emph{SkillsBench} (87 tasks across 229 skills with deterministic pytest verifiers) as an independent external benchmark.

\noindent\textbf{Conditions.} We evaluate each task under three setups: \textbf{Condition D} establishes a lower bound by providing the agent with no skill at all; \textbf{Condition A} serves as our baseline by providing the original, uncompressed skill body and all references; and \textbf{Condition C} tests \textsc{SkillReducer} by providing only the compressed core and loading references on demand.

\noindent\textbf{Metrics.}
We report:
(1)~\emph{compression ratio} = $1 - |b^*|/|s.b|$,
(2)~\emph{pass rate} = fraction of skills where $\text{score}_C \geq \text{score}_A$,
(3)~\emph{improvement rate} = fraction of skills where $\text{score}_C > \text{score}_A$, and
(4)~\emph{retention} (Eq.~\ref{eq:retention}).
All intervals are bootstrap 95\% CIs (10,000 resamples).
Since each skill is evaluated under both conditions A and C (paired design) and scores are bounded in $[0, 1]$ with 19.8\% ceiling cases ($D{=}A{=}C{=}1.0$), we use the non-parametric Wilcoxon signed-rank test for pairwise comparisons, with Cohen's $d$ as an approximate effect size.
The baseline comparison (Table~\ref{tab:baselines}) uses one-sided sign tests due to the high proportion of ceiling ties.

\noindent\textbf{Implementation.}
Stage~1 uses DeepSeek-V3~\cite{DeepSeek2024V3} for description segmentation, compression, and adversarial skill generation, and DeepSeek-R1~\cite{DeepSeek2025R1} as the simulated routing oracle; Phase~2 real-trigger validation runs through the Claude Code CLI~\cite{ClaudeCode2025}.
Stage~2 uses DeepSeek-V3 for content classification, body compression, reference deduplication, and faithfulness verification.
Task generation, agent execution, and evaluation use Qwen3.5~\cite{Qwen2025Qwen3}, an open-source model, under a separate API key and session to prevent information leakage between compression and evaluation.
Section~\ref{sec:rq_crossmodel} further evaluates generalization across five models from four families.
All tokenization uses OpenAI's \texttt{cl100k\_base} encoder via \texttt{tiktoken}.

\noindent\textbf{Scale.}
In total, the evaluation compresses 600 skills through both stages, generates and evaluates 3,000 tasks across three conditions (D/A/C), validates routing on 600 skills in real agent environments, and tests generalization across 5~models from 4~families and an independent agent framework.
Including ablation (50 skills $\times$ 6 conditions), baseline comparisons (4 methods), and wild-skill validation (198 skills), this constitutes one of the largest evaluations of LLM agent skill optimization to date.

%% ------------------------------------------------------------
\subsection{RQ1: Token Reduction}
\label{sec:rq_compression}

\begin{figure}[t]
  \centering
  \includegraphics[width=\linewidth]{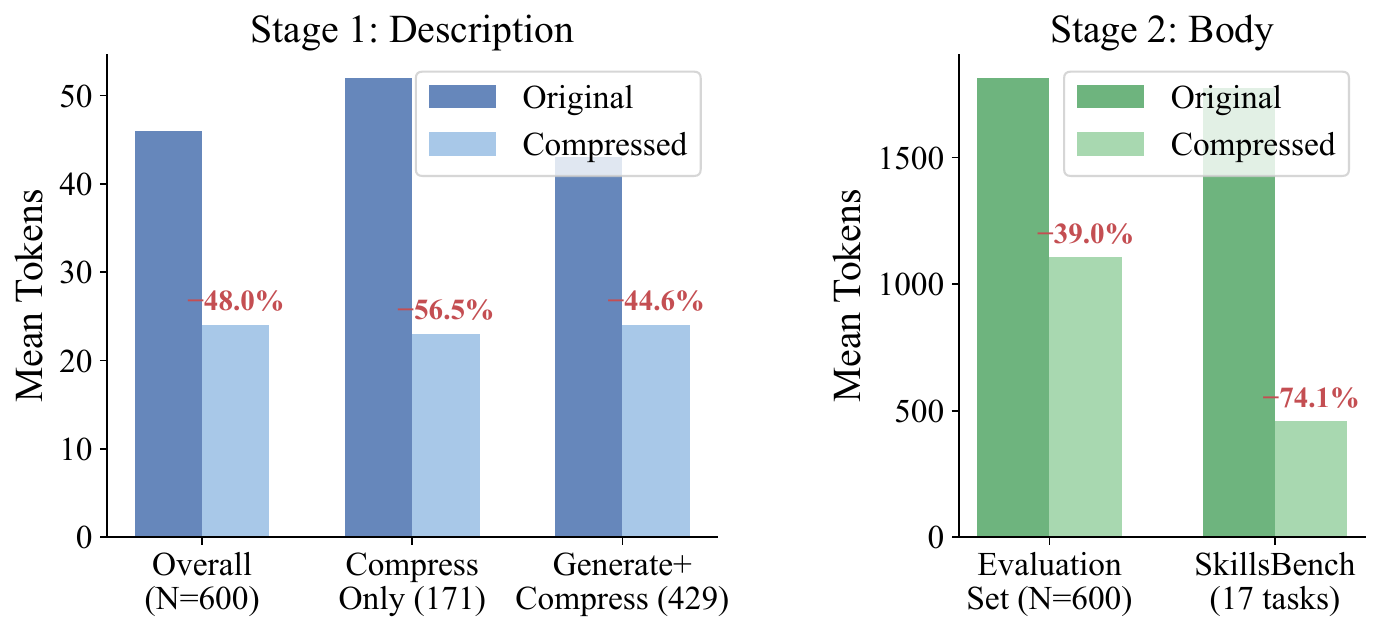}
  \vspace{-6mm}
  \caption{Token reduction achieved by \textsc{SkillReducer}.}
  \vspace{-2mm}
  \label{fig:compression}
\end{figure}

Figure~\ref{fig:compression} summarizes the compression results across both stages. Of the 600 skills, 171 had existing descriptions ($>$40 tokens) that were compressed directly, achieving \textbf{56.5\%} mean reduction.
The remaining 429 had missing or short descriptions ($\leq$40 tokens): we first generate a description from the body, then compress it; the 44.6\% reduction is measured relative to the generated (not original) description, reflecting additional compressibility in LLM-generated text.
Overall, Stage~1 achieves \textbf{48.0\%} mean description token reduction (median 59.0\%, 95\% CI: [45.2\%, 50.8\%]).

Body compression achieves a mean token reduction of \textbf{39.0\%} (median 43.0\%, 95\% CI: [36.2\%, 41.8\%]), saving an average of ${\sim}$1,000 tokens per skill.
On SkillsBench (84 of 87 tasks successfully compressed), body tokens drop from 359K to 84K (\textbf{75.0\%} mean per-task reduction), as these curated skills tend to have longer, more compressible bodies.

\noindent\textbf{Scalability to Wild Skills.}
To validate that compression generalizes beyond our evaluation set, we apply Stage~2 Gate~1 (compression + faithfulness check, no task evaluation) to a stratified sample of 200 skills drawn from the 55,315 Wild dataset (198 successful), stratified by body length and file structure.
The results confirm scalability: mean core reduction is \textbf{77.5\%} (median 81.5\%), with a clear length-dependent pattern: tiny skills ($<$500 tokens) achieve 58.8\% reduction, medium skills (1--3K tokens) achieve 79.1\%, large skills ($>$3K tokens) achieve 89.1\%, and extra-large skills ($>$10K tokens) reach 95.8\%.
Faithfulness verification passes for 74.7\% (148/198) of sampled skills.
The higher compression rates compared to the evaluation set (77.5\% vs.\ 39\%) are partly due to the absence of Gate~2 feedback (which promotes items back to core, reducing compression) and partly due to the composition of wild skills: longer bodies offer more redundancy to remove.

\noindent\textbf{End-to-End Token Savings.}
Using our empirical core ratio $\rho{=}0.383$ and a reference loading rate $p{=}0.3$ (from Gate~2 tool-call logs; see Proposition~\ref{prop:cost} in Appendix~\ref{app:theory} for the formal derivation), the expected body cost is $(\rho + p(1{-}\rho)) \cdot |s.b| = 0.568 \cdot |s.b|$, a 43.2\% reduction before core compression.
An end-to-end analysis confirms mean input savings of \textbf{26.8\%} in the best case, without inflating agent output ($+$1.7\% mean, $-$0.3\% median).
\textsc{SkillReducer} requires ${\sim}$20--40 LLM calls per skill; at the average rate of commonly-used low-cost open-source models (${\sim}$\$0.3/M input tokens), compressing 600 skills costs ${\sim}$\$14--18, a one-time expense amortized within a few hundred invocations per skill.

\finding{RQ1: \textsc{SkillReducer} achieves 48\% description compression and 39\% body compression, with end-to-end savings of 26.8\%. Compression scales to wild skills (77.5\% on 198 samples) and is economically viable (\$14--18 for 600 skills).}

%% ------------------------------------------------------------
\subsection{RQ2: Functional Quality}
\label{sec:rq_quality}

\begin{table}[t]
\centering
\caption{Gate 2 task evaluation (N=600).}
\label{tab:gate2}
\begin{tabular}{lcccc}
\toprule
& $D$ & $A$ & $C$ & pass rate \\
\midrule
All (600) & 0.684 & 0.722 & \textbf{0.742} & 86.0\% \\
Official (87) & 0.696 & 0.743 & \textbf{0.831} & --- \\
Community (464) & 0.691 & 0.713 & \textbf{0.724} & --- \\
Wild (49) & 0.619 & 0.758 & \textbf{0.773} & --- \\
Non-ceiling (481) & 0.606 & 0.653 & \textbf{0.678} & 82.5\% \\
\midrule
\multicolumn{5}{l}{C vs.\ A: $p{=}0.002$**, $d{=}0.107$, C$>$A 25.3\%, C$<$A 14.0\%} \\
\multicolumn{5}{l}{C vs.\ D: $p{<}0.001$***, $d{=}0.171$} \\
\bottomrule
\end{tabular}
\end{table}

\begin{table}[t]
\centering
\caption{Baseline comparison (N=50, same token budget). One-sided sign tests on non-tied pairs.}
\label{tab:baselines}
\begin{tabular}{lcccc}
\toprule
Method & Score & Retention & W/T/L & $p$ \\
\midrule
A (Original) & 0.921 & --- & & \\
\textbf{C (\textsc{SkillReducer})} & \textbf{0.909} & \textbf{0.949} & --- & --- \\
\midrule
P (LLMLingua) & 0.767 & 0.820 & 28/16/6 & $<$0.001 \\
L (LLM direct) & 0.866 & 0.918 & 19/26/5 & $0.003$ \\
T (Truncation) & 0.791 & 0.845 & 27/17/6 & $<$0.001 \\
R (Random sent.) & 0.694 & 0.750 & 38/10/2 & $<$0.001 \\
\bottomrule
\end{tabular}
\end{table}

\noindent\textbf{Routing Preservation (Stage 1).}
\label{sec:stage1_results}
To validate that compressed descriptions still trigger correctly, we deploy each skill with its compressed description and test whether the agent invokes it for the same queries (Section~\ref{sec:stage1_phase2}).
Table~\ref{tab:stage1_trigger} shows the results across all 600 skills: 252 (42.0\%) passed directly, 245 (40.8\%) required selective restore, and 39 (6.5\%) fell back to the original description, yielding \textbf{536/536 (100\%)} routing preservation among non-obsolete skills.
The high restore rate confirms the two-phase design: Phase~1's simulated oracle is fast but overly optimistic, and Phase~2 catches over a third of over-compressed descriptions.
The remaining 64 skills (10.7\%) did not trigger even with the original description, suggesting skill obsolescence (Section~\ref{sec:rq_crossmodel}).

\begin{table}[t]
\centering
\caption{Stage 1 real-trigger validation (N=600).}
\label{tab:stage1_trigger}
\begin{tabular}{lr}
\toprule
Metric & Value \\
\midrule
Direct pass (no recovery needed) & 252 (42.0\%) \\
Pass after selective restore & 245 (40.8\%) \\
Fallback to original description & 39 (6.5\%) \\
Obsolescence (original fails too) & 64 (10.7\%) \\
\midrule
\textbf{Total routing preserved} & \textbf{536/536 (100\%)} \\
Description compression (passed) & 65\% mean \\
\bottomrule
\end{tabular}
\end{table}

\noindent\textbf{Task Evaluation (Stage 2).}
\label{sec:gate2_results}
Table~\ref{tab:gate2} shows the Gate~2 results across all 600 skills.
The core result is that compression preserves functional quality: 86.0\% of skills (95\% CI: [83.1\%, 88.7\%]) have $\text{score}_C \geq \text{score}_A$, with 25.3\% improving and only 14.0\% regressing.
The advantage holds after excluding ceiling-effect skills (N=481, $p=0.010$).
Counter-intuitively, condition~C outperforms condition~A on average (0.742 vs.\ 0.722, $p=0.002$, $d=0.107$), suggesting a \emph{less-is-more} effect where non-essential content distracts the agent from core instructions~\cite{Liu2024LostMiddle,Shi2023Distracted}.
The effect scales with body length (+11.8pp for official skills vs.\ +1.1pp for community).
Among the 84 regression cases (14.0\%), half are attributable to skill obsolescence (the agent performs equally well without the skill), 17\% to evaluation noise, and only 33\% to true compression regression, accounting for only \textbf{4.7\%} of all skills.
The recurring failure mode is example-as-specification: examples that implicitly define expected behavior are moved to reference modules.

\noindent\textbf{Baseline Comparison.}
\label{sec:baselines}
We compare against four baselines at the same token budget (Table~\ref{tab:baselines}):
LLMLingua~\cite{Jiang2023LLMLingua} (perplexity-based pruning),
LLM direct compression,
truncation, and
random sentence removal (50 skills, 3--5 tasks each).
\textsc{SkillReducer} significantly outperforms all four ($p \leq 0.003$).
LLMLingua's low retention (0.820) reflects a fundamental mismatch: perplexity-based pruning discards linguistically predictable but operationally critical tokens, while our taxonomy preserves all core rules.

\noindent\textbf{External Validation.}
\label{sec:skillsbench}
As an independent check free of LLM judges, we evaluate on SkillsBench~\cite{SkillsBench2026} (87 tasks, deterministic pytest verifiers): all 87 pass under both A and C with \textbf{zero regression}, though condition~D already passes 86/87, indicating a ceiling effect.
Our Gate~2 evaluation (where D scores only 0.684) better captures the value of skill content.

\finding{RQ2: Compression preserves quality (86.0\% pass rate, 87/87 on SkillsBench) and outperforms four baselines ($p \leq 0.003$). A less-is-more effect emerges where compression improves performance; only 4.7\% are true compression failures.}

%% ------------------------------------------------------------
\subsection{RQ3: Component Contribution}
\label{sec:rq_components}

\noindent\textbf{Ablation Study.}
\label{sec:ablation}
To quantify the contribution of each pipeline component, we evaluate six conditions on 50 skills (Table~\ref{tab:ablation}).
Taxonomy-driven classification is the key component: without it (C4), retention drops to 0.919, a 6.8pp gap versus the full pipeline.
Reference deduplication is the safest component: C3 achieves retention 1.000 and even outperforms the original ($0.944 > 0.939$), since it keeps the full body untouched and only removes redundant reference content, a pure less-is-more gain.
Description compression (C1) has negligible effect on task performance ($\Delta_A = -0.014$), as it operates on the routing layer.

\begin{table}[t]
\centering
\caption{Ablation study (50 skills $\times$ 5 tasks).}
\label{tab:ablation}
\begin{tabular}{llcccc}
\toprule
Cond. & Components & Score & Ret. & $\Delta_A$ \\
\midrule
D & (no skill) & 0.559 & 0.596 & $-$0.380 \\
A & Original & 0.939 & 1.000 & --- \\
\midrule
C & Full pipeline & \textbf{0.926} & \textbf{0.987} & $-$0.012 \\
C1 & Desc.\ only & 0.925 & 0.985 & $-$0.014 \\
C2 & Body compress (no classify) & 0.882 & 0.940 & $-$0.056 \\
C3 & Ref deduplication only & \textbf{0.944} & \textbf{1.000} & +0.005 \\
C4 & Compress all (no classify) & 0.863 & 0.919 & $-$0.076 \\
\bottomrule
\end{tabular}
\end{table}

The Gate~2 feedback loop also contributes to quality: 38 of 600 skills (6.3\%) triggered at least one feedback iteration; 22 recovered in the first round, and 9 of the remaining 16 recovered in the second (31/38 = 81.6\% total recovery), suggesting diminishing returns beyond the first iteration.
These feedback-triggering skills are systematically harder: longer bodies (2,990 vs.\ 1,673 tokens) and lower core-rule proportion (43.2\% vs.\ 51.1\%).

\noindent\textbf{Sensitivity to Compression Level.}
\label{sec:sensitivity}
Table~\ref{tab:sensitivity} breaks down retention by compression ratio.
Retention stays in the range 0.910--0.956 from light ($<$20\%) to aggressive ($>$80\%) compression, with pass rates consistently above 80\%, confirming that aggressive compression does not systematically increase risk.

\begin{table}[t]
\centering
\caption{Sensitivity analysis: retention by compression ratio.}
\label{tab:sensitivity}
\begin{tabular}{lccccc}
\toprule
Compression & N & Retention & Pass & C$>$A & C$<$A \\
\midrule
$<$20\% & 248 & 0.956 & 87.5\% & 34 & 31 \\
20--40\% & 58 & 0.939 & 84.5\% & 15 & 9 \\
40--60\% & 86 & 0.910 & 81.4\% & 25 & 16 \\
60--80\% & 126 & 0.931 & 85.7\% & 40 & 18 \\
$>$80\% & 82 & 0.942 & 87.8\% & 30 & 10 \\
\bottomrule
\end{tabular}
\end{table}

\finding{RQ3: Taxonomy classification is the key component (6.8pp gap without it). The feedback loop recovers 82\% of failing skills. Retention is stable across all compression levels.}

%% ------------------------------------------------------------

\subsection{RQ4: Cross-Model and Cross-Framework Generalization}
\label{sec:rq_crossmodel}

To test whether compression benefits transfer across models, we evaluate 30 skills on five models spanning four model families: Qwen3-max~\cite{Qwen2025Qwen3} (strong, proprietary), DeepSeek-V3~\cite{DeepSeek2024V3} (the compression model), Qwen2.5-7B~\cite{Qwen2024Qwen25} (weak, 7B parameters), GLM-5~\cite{GLM2024} (Zhipu family), and GPT-OSS-120B (open-source, 120B parameters).
All skills were compressed using DeepSeek-V3; the evaluation uses the full condition~C setup with \texttt{read\_file} tools for reference loading (flat injection for GPT-OSS-120B, which lacks tool-calling support), and 5 tasks per skill.

Table~\ref{tab:crossmodel} shows retention ranging from 0.939 (Qwen2.5-7B) to 0.986 (GLM-5), with a mean of \textbf{0.965} across four model families.
The weak model (Qwen2.5-7B) achieves retention 0.939, confirming that compressions produced by a strong model transfer to substantially smaller models.
GPT-OSS-120B achieves retention 0.982 despite using flat injection (no tool-calling support), indicating compression is robust even without progressive disclosure.
DeepSeek-V3 (the compression model) achieves 0.978.

\noindent\textbf{Cross-Compressor Validation.}
Table~\ref{tab:crossmodel} fixes the DeepSeek-V3 \emph{compressor} and varies the \emph{evaluator}; here we fix the evaluator and vary the compressor.
We re-compress the same 30 skills using Qwen3-max and Qwen2.5-7B instead of DeepSeek-V3.
Retention remains stable: 0.897 (Qwen3-max) and 0.874 (Qwen2.5-7B) vs.\ 0.896 (DeepSeek-V3), indicating that the pipeline's structure-aware design, not the compression model's capability, drives the quality preservation.

\noindent\textbf{Cross-Framework Validation.}
\label{sec:crossframework}
To verify that compressed skills transfer across agent frame\-works, not just models, we deploy 30 skills on OpenCode~(v1.2.27), an open-source coding agent with 120K+ GitHub stars, using its native skill mechanism.
OpenCode differs from our pipeline: it has its own system prompt, tool registry, and skill routing via a dedicated \texttt{skill} tool.
We use DeepSeek-V3 as the backend and evaluate 3 tasks per skill under conditions~A and~C. Results on 30 skills show mean retention of \textbf{0.944}, consistent with the 0.949 observed in our main baseline comparison (Table~\ref{tab:baselines}).
Compressed skills score 0.764 vs.\ originals at 0.751, again exhibiting the less-is-more effect.

\begin{table}[t]
\centering
\caption{Cross-model evaluation (N=30 skills, 5 tasks each).}
\label{tab:crossmodel}
\begin{tabular}{lccc}
\toprule
Model & score$_A$ & score$_C$ & retention \\
\midrule
GLM-5 & 0.926 & 0.962 & 0.986 \\
DeepSeek-V3 & 0.936 & 0.963 & 0.978 \\
Qwen3-max & 0.931 & 0.930 & 0.955 \\
GPT-OSS-120B & 0.951 & 0.958 & 0.982 \\
Qwen2.5-7B & 0.920 & 0.893 & 0.939 \\
\midrule
\textit{Mean} & \textit{0.933} & \textit{0.941} & \textit{0.965} \\
\bottomrule
\end{tabular}
\end{table}

\noindent\textbf{Skill Obsolescence.}
As LLMs grow more capable, skills may become redundant: 10.7\% of skills in Stage~1 did not trigger even with original descriptions, and condition~D achieves 98.9\% on SkillsBench.
Yet skills remain essential for weaker models: Qwen2.5-7B shows the largest gap between conditions A and D, suggesting that smaller models benefit most from skill content.
This tension motivates adaptive skill lifecycle management: Gate~2 can naturally identify candidates for retirement.

\finding{RQ4: Compression transfers across four model families (mean retention 0.965) and to an independent agent framework (OpenCode, retention 0.944). Model obsolescence affects 10.7\% of skills, motivating adaptive skill lifecycle management.}

%% ------------------------------------------------------------
%% ============================================================
\section{Related Work}
\label{sec:related}

\noindent\textbf{Prompt Compression and Context Management.}
LLMLingua~\cite{Jiang2023LLMLingua,Jiang2024LongLLMLingua} and Selective Context~\cite{Li2023SelectiveContext} prune low-perplexity tokens, Gisting~\cite{Mu2024Gisting} learns soft token embeddings, and RECOMP~\cite{Xu2023RECOMP} trains a compressor for retrieval-augmented generation; see~\cite{Li2025PromptSurvey} for a comprehensive survey.
Shi et al.~\cite{Shi2024ConcisePrecise} compress tool documentation (16$\times$) but require task-specific training data.
These methods treat the input as a flat token sequence, which is effective for homogeneous text but cannot distinguish actionable rules from supplementary examples within a skill.
On the context management side, RAG~\cite{Lewis2020RAG} and planning frameworks~\cite{Huang2024PlanSurvey} manage what enters the context window, while AgentCompress~\cite{Huang2025AgentCompress} compresses agent interaction histories.
Levy et al.~\cite{Levy2025ContextLength} show that context inflation degrades LLM performance by 13--85\%, providing independent support for our less-is-more finding.
\textsc{SkillReducer} is training-free and adapts classical SE techniques, including software debloating~\cite{Quach2018Debloating}, delta debugging~\cite{Zeller2002,Wang2025DDOR}, and program slicing~\cite{Weiser1984}, to exploit the internal structure of skills, applying type-specific compression while reducing the initial context footprint and preserving access to full content on demand.

\noindent\textbf{Tool and Skill Ecosystems.}
ToolBench~\cite{Qin2024ToolBench}, API-Bank~\cite{Li2023APIBank}, and AgentBench~\cite{Liu2024AgentBench} benchmark tool selection and agent capabilities, while Yang et al.~\cite{Yang2026SkillsWild} analyze 31K skills from a security perspective.
APE~\cite{Zhou2023APE} and CodeAct~\cite{Wang2024CodeAct} optimize prompts at authoring time.
In contrast, \textsc{SkillReducer} operates post-authoring as a build-time pass, complementing authoring-time and runtime approaches.

%% ============================================================
\section{Threats to Validity}
\label{sec:threats}

\noindent\textbf{Internal Validity.}
LLM-based classification introduces randomness; we mitigate this with temperature 0, conservative fallbacks, and the feedback loop.
Gate~2 serves dual roles (feedback signal and evaluation criterion), so the 86.0\% pass rate reflects optimization-to-criterion; the independent SkillsBench (87/87) partially mitigates this.
We use separate models for compression (DeepSeek-V3) and evaluation (Qwen3.5~\cite{Qwen2025Qwen3}), and validate the LLM judge against deterministic code-execution verifiers ($\kappa{=}0.939$ on 1,646 tasks).

\noindent\textbf{External Validity.}
All 600 skills use the skill protocol proposed by Anthropic~\cite{ClaudeCode2025}; generalization to other platforms (Cursor, Windsurf) remains untested.
The cross-model evaluation covers five models from four families on 30 skills; additional families (e.g., Llama) remain untested.
A stratified study on 198 wild skills confirms compression scalability but lacks Gate~2 evaluation due to cost.

%% ============================================================
\section{Conclusion}
\label{sec:conclusion}

We studied 55,315 LLM agent skills and found systemic token inefficiencies across descriptions, bodies, and references.
\textsc{SkillReducer} addresses this through delta-debugging-based routing compression (48\%) and taxonomy-driven progressive disclosure (39\% body reduction), achieving an 86\% pass rate across 600 skills with a less-is-more effect where compression improves performance.
Results transfer across five models from four families (retention 0.965) and to an independent agent framework (retention 0.944).
Only 4.7\% of skills are true compression failures; the remainder stem from skill obsolescence, a broader ecosystem issue motivating model-adaptive skill lifecycle management.
We will release \textsc{SkillReducer} as an open-source build-time tool for skill authors and the broader agent community.

\bibliographystyle{IEEEtran}
\bibliography{references}

%% ============================================================
%% APPENDIX (Supplementary Material)
%% ============================================================
\appendices

\section{Running Example: \texttt{marketing-strategy-pmm}}
\label{app:example}

We illustrate the full pipeline on the \texttt{marketing-strategy-pmm} skill (product marketing, positioning, GTM strategy).

\noindent\textbf{Stage~1: Description.} (87 $\to$ 32 tokens, 63\% reduction).
The original description enumerates every feature and trigger phrase: \emph{``Product marketing, positioning, GTM strategy, competitive intelligence{\ldots}Use when developing positioning, planning product launches, creating messaging, analyzing competitors, entering new markets, enabling sales, or when user mentions product marketing, positioning, GTM, go-to-market{\ldots}''}.
\textsc{ddmin} identifies that the trigger-phrase list is redundant because the router can infer triggers from the feature keywords alone.
The 1-minimal result: \emph{``Product marketing, positioning, GTM strategy, competitive intelligence. Tools: ICP definition, April Dunford methodology, launch playbooks, battlecards, market entry guides''}.

\noindent\textbf{Stage~2: Body.} (2,543 $\to$ 540 tokens, 79\% reduction).
The classifier identifies table-of-contents entries and workflow explanations as background, persona-specific messaging as examples, and HubSpot configuration snippets as templates.
The remaining core (KPIs, methodology steps, decision criteria) is compressed into concise bullet points.
Three body-derived reference modules are created with routing metadata:
\texttt{templates.md} (327 tok, when: ``you need to WRITE HubSpot configs''),
\texttt{examples.md} (684 tok, when: ``you need to SEE persona messaging''),
\texttt{background.md} (602 tok, when: ``you need to understand positioning methodologies'').

\noindent\textbf{Stage~2: Original References.} (9,476 $\to$ 6,691 tokens, 29\% reduction).
Four external reference files are deduplicated against the core and annotated with routing metadata.

\noindent\textbf{End-to-End.} Invocation cost drops from 12,019 to 540 tokens for core-only tasks (96\%), or to 7,231 tokens if all references are loaded (40\%).
Gate~2 confirms $\text{score}_C{=}1.0$ vs.\ $\text{score}_A{=}0.93$.

Table~\ref{tab:example} summarizes the end-to-end compression.

\begin{table}[H]
\centering
\caption{Before/after compression of the \texttt{marketing-strategy-pmm} skill.}
\label{tab:example}
\begin{tabular}{lrrr}
\toprule
Component & Original & Compressed & Reduction \\
\midrule
Description & 87 tok & 32 tok & 63\% \\
Body (always loaded) & 2,543 tok & 540 tok & 79\% \\
References (on demand) & 9,476 tok & 6,691 tok & 29\% \\
\midrule
Total if all loaded & 12,019 tok & 7,231 tok & 40\% \\
Core-only invocation & 2,543 tok & 540 tok & 79\% \\
\bottomrule
\end{tabular}
\end{table}

\section{GMM Model Selection}
\label{app:gmm}

Figure~\ref{fig:b2_elbow} shows the silhouette score analysis used to select $k=5$ for the GMM clustering. The silhouette score reaches its first local peak at $k=5$ (0.393), matching our five-category taxonomy; higher values of $k$ do not consistently improve separation.

\begin{figure}[H]
  \centering
  \includegraphics[width=\linewidth]{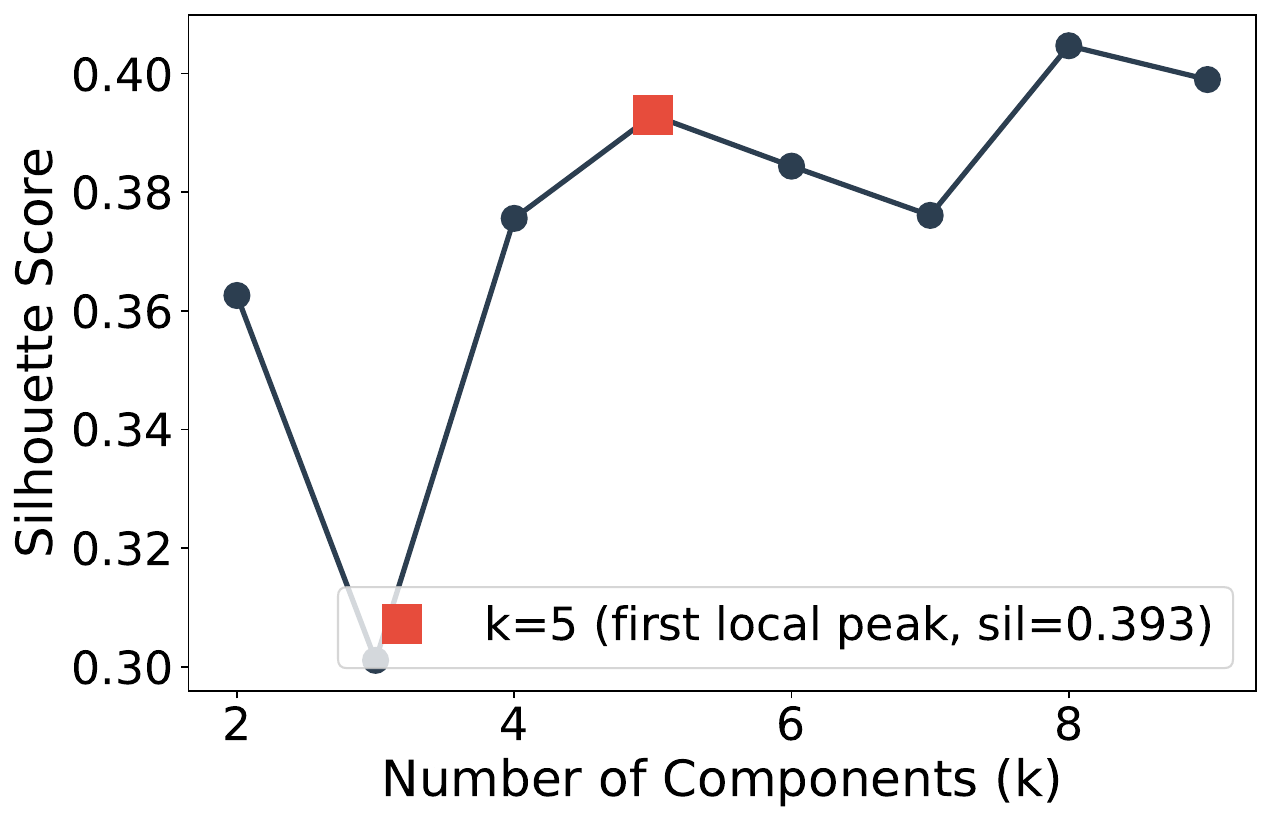}
  \caption{GMM model selection. $k=5$ is the first local peak (silhouette = 0.393), matching the five-category taxonomy.}
  \label{fig:b2_elbow}
\end{figure}

\section{Dataset Statistics}
\label{app:datasets}

Table~\ref{tab:datasets} provides detailed statistics of all datasets used in this work.

\begin{table}[H]
\centering
\caption{Dataset statistics.}
\label{tab:datasets}
\begin{tabular}{lrrr}
\toprule
Dataset & $N$ & Body (tok) & Desc (tok) \\
\midrule
Wild & 55,315 & 2,796 & 45.9 \\
SkillHub & 100 & 5,922 & 47.78 \\
Community & 620 & 1,400 & 27.21 \\
\midrule
Evaluation (Official) & 87 & --- & --- \\
Evaluation (Community) & 464 & --- & --- \\
SkillsBench & 87 tasks & --- & --- \\
\bottomrule
\end{tabular}
\end{table}

\section{Quality Safeguards}
\label{app:guarantee}

\textsc{SkillReducer}'s pipeline is designed to maximize the likelihood that $\text{score}_C \geq \text{score}_A$ through three mechanisms acting in sequence:

\begin{enumerate}
  \item \textbf{Gate 1 (Faithfulness).} After compression, an LLM verifies that all key operational concepts from the original body are preserved in the compressed core ($\mathcal{C}(s.b) \subseteq \mathcal{C}(b^*)$). If any concept is missing, the compression is rolled back to the original for that content type.

  \item \textbf{Gate 2 + Feedback Loop.} If $\text{score}_C < \text{score}_A$ on any task, the system identifies which non-core items caused the failure and promotes them to core. The body is recompressed and retested (up to two iterations). This monotonically increases $|b^*|$ until task performance recovers.

  \item \textbf{Fallback.} If the feedback loop fails to restore $\text{score}_C \geq \text{score}_A$ after two iterations, the current best compression is kept (i.e., the version with the most promoted items), which represents the closest approximation to the original.
\end{enumerate}

This means the only case where $\text{score}_C < \text{score}_A$ can occur in practice is when the LLM-generated evaluation tasks in Gate~2 do not cover the failure mode that manifests at deployment time, a limitation of any finite test suite, analogous to the incompleteness of software testing.

In our experiments, across 600 skills, 86.0\% achieve $\text{score}_C \geq \text{score}_A$.
The 14.0\% with regression are typically skills with deeply interleaved content where examples serve as implicit rules; the classifier moves them to reference modules, but they are needed in the always-loaded core.
This represents a fundamental limitation of taxonomy-based classification: when authors intentionally embed rules within examples, automated separation can lose implicit dependencies.

\section{Theoretical Analysis}
\label{app:theory}

We formalize two properties of \textsc{SkillReducer}: the convergence guarantee of the feedback loop, and the expected token cost under progressive disclosure.

\subsection{Feedback Loop Convergence}

\begin{proposition}[Monotone Convergence of Gate~2 Feedback]
\label{thm:convergence}
Let $\mathcal{I} = \{x_1, \ldots, x_n\}$ be the set of all items in a skill body, and let $I_{\textup{core}}^{(0)} \subseteq \mathcal{I}$ be the initial core set produced by the classifier.
At each feedback iteration $i$, the promotion operator $\mathcal{P}$ adds items from $\mathcal{I} \setminus I_{\textup{core}}^{(i)}$ to form $I_{\textup{core}}^{(i+1)} = I_{\textup{core}}^{(i)} \cup \mathcal{P}^{(i)}$, where $\mathcal{P}^{(i)} \neq \emptyset$ whenever Gate~2 fails.
Then:
\begin{enumerate}
  \item[(i)] The core set is monotonically non-decreasing: $I_{\textup{core}}^{(0)} \subseteq I_{\textup{core}}^{(1)} \subseteq \cdots$
  \item[(ii)] The loop terminates in at most $|\mathcal{I}| - |I_{\textup{core}}^{(0)}|$ iterations.
  \item[(iii)] If Gate~2 is monotone in coverage (adding items never decreases $\textup{score}_C$), then $\textup{score}_C^{(i+1)} \geq \textup{score}_C^{(i)}$.
\end{enumerate}
\end{proposition}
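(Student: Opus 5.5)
The proof is a direct set-theoretic argument, handling the three parts in order.

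For (i), I would unfold the update rule $I_{\textup{core}}^{(i+1)} = I_{\textup{core}}^{(i)} \cup \mathcal{P}^{(i)}$. Since any set is contained in its union with another set, $I_{\textup{core}}^{(i)} \subseteq I_{\textup{core}}^{(i+1)}$ holds at every step. Transitivity of $\subseteq$ (a trivial induction on $i$) then yields the whole chain. I would also record the invariant $I_{\textup{core}}^{(i)} \subseteq \mathcal{I}$. This holds because $I_{\textup{core}}^{(0)} \subseteq \mathcal{I}$ by hypothesis and $\mathcal{P}^{(i)} \subseteq \mathcal{I} \setminus I_{\textup{core}}^{(i)} \subseteq \mathcal{I}$.

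For (ii), I would use the potential function $\Phi_i = |\mathcal{I}| - |I_{\textup{core}}^{(i)}| = |\mathcal{I} \setminus I_{\textup{core}}^{(i)}|$, which is a nonnegative integer by the invariant.

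\begin{itemize}
\item If Gate~2 passes at step $i$, the loop stops.
\item If Gate~2 fails, $\mathcal{P}^{(i)}$ is nonempty and disjoint from $I_{\textup{core}}^{(i)}$. Hence $|I_{\textup{core}}^{(i+1)}| = |I_{\textup{core}}^{(i)}| + |\mathcal{P}^{(i)}| \geq |I_{\textup{core}}^{(i)}| + 1$, so $\Phi_{i+1} \leq \Phi_i - 1$.
\end{itemize}

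So $\Phi$ strictly decreases on every non-terminating iteration, starting from $\Phi_0 = |\mathcal{I}| - |I_{\textup{core}}^{(0)}|$. Therefore at most $\Phi_0$ failing iterations can occur before either Gate~2 passes or $\Phi$ reaches $0$.

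The one subtlety, and the step I expect to need the most care, is the case $\Phi_i = 0$, i.e.\ $I_{\textup{core}}^{(i)} = \mathcal{I}$. If Gate~2 still fails here, the hypothesis $\mathcal{P}^{(i)} \neq \emptyset$ cannot be met, because $\mathcal{I} \setminus I_{\textup{core}}^{(i)} = \emptyset$. I would handle this by reading the hypothesis as applying only while non-core items remain. At the fixpoint $I_{\textup{core}} = \mathcal{I}$ the loop halts by exhaustion, since the promotion operator has nothing left to add. This reading is consistent with the algorithm, which also falls back after its iteration cap. Under it, the total number of iterations is bounded by $\Phi_0$.

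For (iii), I would formalize the hypothesis as a map $I \mapsto \textup{score}_C(I)$ that is monotone with respect to $\subseteq$: if $I \subseteq J$ then $\textup{score}_C(I) \leq \textup{score}_C(J)$. Here $\textup{score}_C^{(i)}$ denotes $\textup{score}_C(I_{\textup{core}}^{(i)})$. Applying monotonicity to the inclusion from (i) gives $\textup{score}_C^{(i+1)} \geq \textup{score}_C^{(i)}$ immediately.

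I would add a remark that this is conditional on the monotonicity assumption. The assumption must be taken to cover the recompression step, since promoted items are kept verbatim and the remaining core items keep their compressed form. It is not a property the proof can establish about an LLM evaluator.
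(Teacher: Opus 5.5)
Your proposal is correct and follows the paper's proof essentially step for step. Part (i) comes from the union update. Part (ii) comes from the strict growth of $|I_{\textup{core}}^{(i)}|$ on each failing iteration, bounded above by $|\mathcal{I}|$; your potential $\Phi_i$ just repackages this count. Part (iii) applies the monotonicity assumption to the inclusion chain from (i).

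Your special handling of the fixpoint $I_{\textup{core}}^{(i)} = \mathcal{I}$ is harmless but not needed for the statement as written. There a failing Gate~2 would force $\emptyset \neq \mathcal{P}^{(i)} \subseteq \mathcal{I} \setminus \mathcal{I} = \emptyset$, so the literal hypothesis already rules that case out.
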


\begin{proof}
(i)~By construction, $\mathcal{P}^{(i)} \subseteq \mathcal{I} \setminus I_{\textup{core}}^{(i)}$, so $I_{\textup{core}}^{(i+1)} \supseteq I_{\textup{core}}^{(i)}$.

(ii)~Since $|I_{\textup{core}}^{(i+1)}| > |I_{\textup{core}}^{(i)}|$ when $\mathcal{P}^{(i)} \neq \emptyset$, and $|I_{\textup{core}}^{(i)}| \leq |\mathcal{I}|$, the loop terminates after at most $|\mathcal{I}| - |I_{\textup{core}}^{(0)}|$ iterations.
In practice, we cap at $K=2$ iterations.

(iii)~The monotonicity assumption states that providing more skill content to an agent cannot decrease task performance, i.e., $I \subseteq I' \implies \textup{score}_C(I') \geq \textup{score}_C(I)$.
Under this assumption, since $I_{\textup{core}}^{(i)} \subseteq I_{\textup{core}}^{(i+1)}$, we have $\textup{score}_C^{(i+1)} \geq \textup{score}_C^{(i)}$.

Note that the monotonicity assumption in~(iii) does not always hold in practice: adding content can distract the agent (our less-is-more finding).
However, the promoted items $\mathcal{P}^{(i)}$ are specifically selected to address failed rubric criteria, making distraction unlikely for these targeted additions.
Empirically, the loop improves retention for the majority of triggered skills: 71.1\% (27/38) achieve $\textup{score}_C \geq \textup{score}_A$ after at most two iterations.
The remaining 28.9\% are cases where (iii)'s monotonicity assumption is violated; the promoted items introduce distraction that offsets their informational benefit.
\end{proof}

\subsection{Expected Cost Under Progressive Disclosure}

\begin{proposition}[Expected Invocation Cost]
\label{prop:cost}
Let a skill $s$ have body $s.b$ with $|\mathcal{I}|$ items, of which a fraction $\rho$ are classified as core.
After compression, the core module has $\rho \cdot |s.b|$ tokens (before core-internal compression) and $k$ reference modules $R^* = \{r_1^*, \ldots, r_k^*\}$.
If each reference is loaded independently with probability $p_j$, the expected invocation cost is:
\begin{equation}
  \label{eq:expected_cost}
  \mathbb{E}[\textup{Cost}'(s)] = |s'.d| + \alpha \cdot \rho \cdot |s.b| + \sum_{j=1}^{k} p_j \cdot |r_j^*|
\end{equation}
where $\alpha \in (0, 1]$ is the core-internal compression factor (merging and tightening of core rules).
\end{proposition}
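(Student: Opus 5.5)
The plan is to write the realized invocation cost as a sum of random variables and then take expectations using linearity. Fix a task and let $X_j \in \{0,1\}$ indicate whether the agent loads reference module $r_j^*$ through \texttt{read\_file}. By Eq.~\eqref{eq:cost_opt}, the realized cost is
\begin{equation*}
\textup{Cost}'(s) = |s'.d| + |b^*| + \sum_{j=1}^{k} X_j \, |r_j^*| ,
\end{equation*}
because $R^*_{\text{used}} = \{r_j^* : X_j = 1\}$. The description $s'.d$ and the core module $b^*$ are fixed outputs of Stage~1 and Stage~2, so they do not depend on the task. Their contribution to the cost is therefore deterministic and passes through the expectation unchanged.

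Next I will identify $|b^*|$ with $\alpha \rho |s.b|$. The uncompressed core is made of the items in $I_{\text{core}}$, and the statement stipulates that these account for $\rho \cdot |s.b|$ tokens. Core-internal compression (\textsc{CompressCore}) maps this text to $b^*$, and I define $\alpha := |b^*| / (\rho |s.b|)$. I then argue that $\alpha \in (0,1]$. It is at most $1$ because the design requires the compressed core to be strictly shorter than its input, and Gate~1 rollbacks or promotions restore content at most to its original length. It is positive whenever the core is nonempty. If $\rho = 0$, the term vanishes and any value of $\alpha$ works.

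The final step is linearity of expectation. Since $\mathbb{E}[X_j] = \Pr[X_j = 1] = p_j$, we get $\mathbb{E}\bigl[\sum_j X_j |r_j^*|\bigr] = \sum_j p_j |r_j^*|$. This step does not actually use the independence of the loads; it is needed only if one also wants the variance. Adding the deterministic terms then gives Eq.~\eqref{eq:expected_cost}.

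The main obstacle is interpretive rather than technical. "$\rho$ fraction of items" has to be read as a fraction of tokens for the term $\rho|s.b|$ to make sense, and the definition of $\alpha$ has to absorb the case where feedback promotion leaves items uncompressed. I will state both readings explicitly as modeling conventions. As a corollary I will specialize to $p_j = p$ and $\sum_j |r_j^*| \approx (1-\rho)|s.b|$, with $\alpha = 1$, which recovers the $(\rho + p(1-\rho))|s.b|$ figure quoted in RQ1.
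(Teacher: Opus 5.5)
Your proposal is correct and matches the paper's proof. As there, the description and the compressed core $\alpha\rho|s.b|$ are always loaded, and linearity of expectation over the load indicators gives $\sum_j p_j|r_j^*|$; you are also right that independence is not needed. The one point to tighten is your side argument for $\alpha\le 1$, which the paper simply takes as part of the definition: feedback promotion appends uncompressed non-core items, so $|b^*|$ can exceed $\rho|s.b|$ unless $\rho$ is read as the post-promotion core fraction.
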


\begin{proof}
The description $s'.d$ and compressed core $\alpha \cdot \rho \cdot |s.b|$ are always loaded.
Each reference $r_j^*$ is loaded only when the agent issues a \texttt{read\_file} call, which occurs with probability $p_j$.
By linearity of expectation, the total expected cost follows.
\end{proof}

\noindent\textbf{Instantiation with Empirical Values.}
From our evaluation:
$\rho = 0.383$,
$\alpha \approx 0.63$ (core-internal compression from Gate~1 data),
$k \approx 2.1$ reference modules per skill (mean),
and $p \approx 0.30$ (mean reference loading rate from Gate~2 tool-call logs).
Assuming uniform $p_j = p$ and mean reference size $|r_j^*| \approx 0.617 \cdot |s.b| / k$ (the non-core content, evenly split):

\begin{align}
  \mathbb{E}[\text{Cost}'(s)] &\approx |s'.d| + (0.63 \times 0.383 + 0.30 \times 0.617) \cdot |s.b| \notag \\
  &= |s'.d| + (0.241 + 0.185) \cdot |s.b| \notag \\
  &= |s'.d| + 0.426 \cdot |s.b|
\end{align}

This yields a \textbf{57.4\%} expected body cost reduction in the best case (description cost is negligible: mean 45.9 tokens vs.\ body mean 2,796 tokens).
In a conservative scenario ($\alpha = 1.0$, no core-internal compression; $p = 0.5$, high reference usage), the reduction drops to $1 - (0.383 + 0.5 \times 0.617) = 30.9\%$.
Our empirical measurement of 26.8\%--43.2\% falls within this range, with the variance driven by heterogeneous skill lengths, compression ratios, and reference loading patterns across the 490 analyzed skills.

\end{document}